\newtheorem{theorem}{Theorem}
\newtheorem{proposition}{Proposition}
\newcommand{\one}[0]{\mathds{1}}
\newcommand{\ot}[0]{\otimes}
\DeclareMathOperator{\sym}{sym}
\begin{document}

\title{The Richness of Bell Nonlocality: Generalized Bell Polygamy and Hyper-Polygamy}

\author{Gerard Angl\`es Munn\'e}
\affiliation{Institute of Theoretical Physics and Astrophysics, University of Gdańsk, 80-308 Gda\'nsk, Poland}

\author{Pawe{\l} Cie\'sli\'nski}
\affiliation{Institute of Theoretical Physics and Astrophysics, University of Gdańsk, 80-308 Gda\'nsk, Poland}
\affiliation{International Centre for Theory of Quantum Technologies, University of Gdansk, 80-309 Gda{\'n}sk, Poland}
\affiliation{Faculty of Physics, Ludwig Maximilian University, 80799 Munich, Germany}

\author{Jan W\'ojcik}
\affiliation{Institute of Theoretical Physics and Astrophysics, University of Gdańsk, 80-308 Gda\'nsk, Poland}

\author{Wies{\l}aw Laskowski}
\email{wieslaw.laskowski@ug.edu.pl}
\affiliation{Institute of Theoretical Physics and Astrophysics, University of Gdańsk, 80-308 Gda\'nsk, Poland}

\begin{abstract}
 Non-classical quantum correlations underpin both the foundations of quantum mechanics and modern quantum technologies. Among them, Bell nonlocality is a central example. For bipartite Bell inequalities, nonlocal correlations obey strict monogamy: a violation of one inequality precludes violations of other inequalities on the overlapping subsystems. In the multipartite setting, however, Bell nonlocality becomes inherently polygamous. This was previously shown for subsystems obtained by removing a single particle from an $N$-partite system. Here, we generalize this result to arbitrary $(N-k)$-partite subsystems. We demonstrate that a single $N$-qubit state can violate all $\binom{N}{k}$ relevant Bell inequalities simultaneously.  We further construct an $N$-qubit Bell inequality, obtained by symmetrizing the $(N-k)$-qubit ones, that is maximally violated by states exhibiting this generalized polygamy. We compare these violations with those achievable by GHZ states and show that polygamy offers an advantage in multipartite scenarios, providing new insights into scalable certification of non-classicality in quantum devices. Our analysis relies on symmetry properties of the MABK inequalities. Finally, we show that this behavior can occur across multiple subsystem sizes, a phenomenon we call hyper-polygamy. These structures reveal the remarkable abundance of nonlocality present in multipartite quantum states and offer perspectives for their applications in quantum technologies.
\end{abstract}

\maketitle

\section{Introduction}
    
Examining the behavior of non-classical correlations appearing in quantum mechanics has led to a deeper understanding of Nature's underlying structure. One of the most profound examples stems from Bell's theorem~\cite{Bell1964}, which made experimental tests of local realism possible. Bell nonlocality revealed through the violation of Bell inequalities is now not only a subject of purely foundational studies, but it is also one of the most important resources in the field of quantum technologies~\cite{Brunner_2014}. Its applications include quantum key distribution~\cite{Pawlowski_review_2025, Murta2020,Mayers_1998, Barrett_2005, Acin_2007, Pironio_2009}, randomness generation~\cite{Pironio_2010,Colbeck_2011,Colbeck_2011_2, ArnonFriedman_2018}, and communication~\cite{Cleve_1997,Brukner_2002, Brukner_2004,Moreno_2020} among others~\cite{Brunner_2014}. 

One of the fundamental principles of Bell nonlocality is its monogamy~\cite{SG2001, TVmono2006,Toner2009mono, Toner2007PhD}. It shows that the structure of nonlocal correlations in subsystems of a tripartite scenario with two settings per party is limited only to \emph{one pair of observers}. This finding has strong implications for communication security, see e.g.~\cite{Pawlowski_review_2025}. However, recently it was shown that the principle of monogamy holds universally only for the case of three observers and bipartite Bell inequalities. In the multipartite case, \emph{the nature of nonlocality is truly polygamous}. The findings reported in~\cite{poly} show that for a number of observers $N>3$ there always exists a $(N-1)$-qubit two-setting Bell inequality and an $N$-qubit state such that all $N$ inequalities on each $(N-1)$ partitions are violated simultaneously.

A phenomenon of monogamy violation was first observed in~\cite{SG2001} by examining the reduced states' nonlocal properties for a few overlapping parties with possibly different measurement settings.  Later, results for different Bell scenarios~\cite{Collins2004, Cui_2025} or few (but not all) parties~\cite{Kurzynski2011, Augusiak2014, Ramanathan2014, Tran2018, Augusiak2017, Ram_2018, Wiesniak2021} were reported, but~\cite{poly} provided a methodology for identifying the polygamous states and generating the desired inequalities for any number of observers in a two-setting two-output scenario.

In this work, we generalize this result to any $(N-k)$ party subsystems. We show that for any $k$ there exists a system size $N_k$ such that one can find a two-setting Bell inequality involving $N_k-k$ parties that is violated in all $\binom{N_k}{k}$ subsystems simultaneously. We call such a generalized Bell polygamous behavior $k$-polygamy. Our proof exploits the Mermin-type correlations arising from the Mermin-Ardehali-Belinskii-Klyshko (MABK) inequalities~\cite{MERMIN, Ardehali_1992, Belinskii1993} and linear programming for $N=6$, where the optimal $N_2$ is claimed. 
The same technique was used in~\cite{poly} to establish the optimal $N_1$.
Furthermore, we show that symmetrization of the $(N-k)$-qubit MABK inequalities is maximally violated by the states exhibiting the generalized polygamous behavior. We compare its violations for the chosen $k$-polygamous states with the standard approach using the GHZ state of $N-k$ qubits. 
We find that for $k=1$, 
the sum of the squared violation factors of all subsystems is the same in both cases. 
However, contrary to the GHZ states, in the
polygamous scenario, in all subsystems, we observe Bell violations. For $k>1$, the $k$-polygamous states perform better. 

At last, we observe that for a given positive integer $K$, witnessing $k$-polygamous correlations for all $1 \leq k\leq K$ at once is possible.
We named this behavior a $K$-hyper-polygamy.
This shows how vast the amount of Bell nonlocality in multipartite settings really is. 
Beyond standard applications of Bell nonlocality, such as communication complexity reduction (see e.g. ~\cite{Brukner_2002}), which can now be applied to multiple subsystems at once, our findings may have a potential impact on quantum cryptography protocols (conference key agreement~\cite{Murta2020}) or nonlocality-based benchmarking and certification of quantum devices.

\section{Generalized polygamy}

In ~\cite{poly}, it was shown that Bell nonlocality in multipartite systems is a polygamous phenomenon. More specifically, if the number of observers $N$ sharing a global qubit state $|\psi_N\rangle$ is greater than $3$, then all $N$ suitably chosen two-setting Bell inequalities on $(N-1)$-party subsystems can be violated \textit{simultaneously}. While $N=4$ requires a specific type of inequality, starting from $N=5$, such a behavior is present in the MABK (also referred to as Mermin-type) inequalities~\cite{MERMIN, Ardehali_1992, Belinskii1993}. Here, we extend this concept to any $(N-k)$-partite subsystems by proving that for all $k>0$ there exists a set of two-setting Bell inequalities on $(N-k)$ qubits and a global $N$-qubit state such that all $N \choose k$ inequalities are violated simultaneously.

We start by generalizing the notion of Bell polygamy from~\cite {poly} to an arbitrary number of discarded parties through the following. 
We state that if for any $k>0$, one can find a number of observers $N_{k}$ such that there exist a $N_k$-qubit state $|\psi_{N_k}\rangle$ and a two-setting Bell inequality $I_S<L_{N_k-k}$ with $S$ being a subsystem of size $N_k-k$ that
\begin{equation}\label{eq:kpoly}
     \forall S \,\,\bra{\psi_{N_k}} (I_{S} \otimes \one_k) \ket{\psi_{N_k}} > L_{N-k} \,,
    \end{equation}
the $k$-polygamy holds. 
Here, $\mathds{1}_k$ is a $k \times k$ identity matrix.
Thus, this system of $N_k$ observers simultaneously violates all inequalities $I_S$ for all $S$.

This also means that one can construct quantum states whose nonlocal properties are robust under the loss of arbitrary particles (even deterministically lost ones). Hence, Bell nonlocality shared between the observers and the number of inequalities violated by a suitable state can be almost arbitrary, and polygamous correlations introduced in~\cite{poly} are even more abundant in quantum mechanics. 

We will prove that the above holds for the MABK inequalities through the following theorem.
\begin{theorem}
    For any $k>0$ there exists a number of observers $N_k$ such that a violation of all $N_k \choose k$ MABK inequalities $\mathcal{M}_{N_k-k}<L_{N_k-k}$ among $(N_k-k)$ parties is possible. 
\end{theorem}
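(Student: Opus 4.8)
The plan is to prove this constructively by exhibiting, for each $k$, an explicit $N_k$-qubit state and showing that the expectation value of every MABK operator $\mathcal{M}_{N_k-k}$ on every $(N_k-k)$-party subsystem exceeds the local bound $L_{N_k-k}$ simultaneously. The natural candidate state to try first is a symmetric (permutation-invariant) state, since the symmetry would make the expectation value $\bra{\psi_{N_k}}(\mathcal{M}_S\otimes\one_k)\ket{\psi_{N_k}}$ identical for all $\binom{N_k}{k}$ subsystems $S$. This collapses the problem from $\binom{N_k}{k}$ separate conditions to a single inequality, which is the key simplification. Concretely, I would look at Dicke states or GHZ-type superpositions and, following the $k=1$ construction in~\cite{poly}, aim to reuse the structure that already works there as a building block.

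\medskip

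\noindent First I would recall the recursive/symmetry structure of the MABK operators $\mathcal{M}_M$ and compute the reduced action of $\mathcal{M}_S\otimes\one_k$ on a symmetric trial state. The Mermin-type correlations are built from tensor products of the two local settings (typically $\sigma_x$ and $\sigma_y$ directions), and the essential point established earlier in the excerpt is that these correlations exhibit the polygamous behavior; so I would express $\bra{\psi_{N_k}}(\mathcal{M}_S\otimes\one_k)\ket{\psi_{N_k}}$ in terms of correlation functions of the global state and identify which multi-qubit correlators contribute. The next step is to choose the orientation of the measurement settings (an overall rotation angle, as in the $N=4$ and MABK constructions) as a free parameter to optimize the single surviving expectation value against $L_{N_k-k}$.

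\medskip

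\noindent The main obstacle I anticipate is showing that a single $N_k$ can be chosen \emph{large enough} to make the violation strict for all subsystems at once, while controlling how the local bound $L_{N_k-k}$ scales relative to the achievable quantum value as $k$ grows. For fixed small cases such as $k=1$ and $k=2$ one can appeal to the linear-programming computations mentioned in the excerpt (the $N=6$ LP that certifies the optimal $N_2$), but for the general-$k$ existence statement I would need either an explicit family of states with a provable lower bound on the reduced correlators, or an inductive argument that lifts a $k$-polygamous construction to a $(k+1)$-polygamous one by adding ancillary qubits. Establishing that the partial trace over $k$ qubits does not wash out the Mermin correlations below $L_{N_k-k}$ is precisely where the symmetry of the MABK inequalities must be exploited, and I expect this to be the technical crux.

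\medskip

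\noindent Finally, I would assemble the pieces: verify that the chosen state lies in the symmetric subspace so that all $\binom{N_k}{k}$ expectation values coincide, plug in the optimized settings, and check the single inequality $\bra{\psi_{N_k}}(\mathcal{M}_S\otimes\one_k)\ket{\psi_{N_k}}>L_{N_k-k}$. If the direct symmetric construction fails to give a strict simultaneous violation, the fallback is to symmetrize an asymmetric seed state over the permutation group and argue that the averaged expectation still exceeds the bound, since the local bound is permutation-invariant. Concluding the existence of a suitable $N_k$ for every $k>0$ then establishes the theorem.
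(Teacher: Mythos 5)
Your strategic setup coincides with the paper's: a permutation-invariant $N$-qubit state makes all $\binom{N}{k}$ reduced expectation values equal, collapsing the $\binom{N}{k}$ conditions to a single inequality, and the optimal states are indeed Dicke superpositions generalizing the $k=1$ construction of~\cite{poly}. The problem is that you stop exactly where the proof has to start doing work. You correctly flag as ``the technical crux'' whether tracing out $k$ qubits washes the Mermin correlations below $L_{N-k}$, but you only list two unexecuted options (an explicit family with a provable bound, or an induction on $k$) and never carry either out. The paper resolves the crux by direct computation: for the optimal GHZ settings the Bell operator has the rank-two Dicke form $M_{N-k}=2^{N-k-1}\bigl(|D^0_{N-k}\rangle\langle D^{N-k}_{N-k}|+|D^{N-k}_{N-k}\rangle\langle D^0_{N-k}|\bigr)$, so on $|\mathrm{PI}_N\rangle=\sum_i c_i |D^i_N\rangle$ the reduced expectation is the explicit quadratic form $\mathcal{M}_{N-k}=\sum_{s=0}^{k} A^{N,k}(s)\, c_s\, c_{N-k+s}$ with $A^{N,k}(s)=2^{N-k}\binom{k}{s}\binom{N}{s}^{-1/2}\binom{N}{k-s}^{-1/2}$, and a short convexity-plus-monotonicity argument shows the maximum is $\tfrac{1}{2}A^{N,k}([\,k/2\,])$, attained by $\tfrac{1}{\sqrt{2}}\bigl(|D_N^{[\,k/2\,]}\rangle+|D_N^{N-k+[\,k/2\,]}\rangle\bigr)$. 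Note also that no extra rotation-angle optimization is needed, contrary to your second step: the standard optimal GHZ settings are what produce this tractable form.

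Even granting that computation, your proposal omits the final existence argument, which is the actual content of the theorem: one must show $\mathcal{M}^{\max}_{N-k}/L_{N-k}>1$ for some $N$. The decisive observation is a scaling comparison: for fixed $k$, $\binom{N}{s}$ grows only polynomially in $N$, so $A^{N,k}([\,k/2\,])$ grows like $2^{N-k}N^{-k/2}$ up to $k$-dependent constants, while $L_{N-k}=2^{(N-k-1)/2}$; hence the violation factor grows like $2^{(N-k)/2}N^{-k/2}$ and diverges as $N\to\infty$, so it exceeds $1$ for some $N_k$. Without this (or an equivalent quantitative bound), ``choose $N_k$ large enough'' is an assertion, not a proof. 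Separately, your fallback of symmetrizing an asymmetric seed state is flawed as stated: mixing or projecting permuted copies averages each subsystem's expectation over all subsystems, so a seed violating one inequality yields, at best, that the \emph{average} over subsystems exceeds the bound only if the sum of the seed's violations already exceeds $\binom{N}{k}L_{N-k}$ --- which is essentially the statement you are trying to prove, not a weaker one.
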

\noindent Note that the obtained $N_k$ does not have to be optimal in this case. Namely, there may exist other inequalities that, for a given $k$, yield a smaller $N_k$. However, a single inequality is sufficient to prove our claim. Later, we will discuss the possible improvement on this result using linear programming. 
Our approach to the studied problem starts with considering the general pure $N$-qubit permutation invariant state
\begin{equation}
    |\text{PI}_N\rangle =\sum_{i=0}^N c_i |D^i_N\rangle \,,
    \label{eq:PI_state}
\end{equation}
where $|D^i_N\rangle$ are the $N$-qubit Dicke states with $i$ excitations. The reason for examining a permutation invariant state is that its reduced states' expectation values on permutation invariant operators are equal, and choosing a Bell inequality with such a structure automatically yields the polygamous behavior. The $(N-k)$-party Mermin-type operator for optimal GHZ settings of the MABK inequalities can be expressed using the Dicke states in the following way~\cite{Belinskii1993, Scarani_2001, Nagata_2006}
\begin{eqnarray}\label{eq:merm_simp}
    M_{N-k}=2^{N-k-1}\left(|D^0_{N-k}\rangle \langle D^{N-k}_{N-k}| + |D^{N-k}_{N-k}\rangle \langle D^0_{N-k}|\right).\,\,
\end{eqnarray}
Any local and realistic model has to yield $M_{N-k}<L_{N-k}$, where $L_{N-k}=2^{(N-k-1)/2}$. With straightforward calculations, one can show that its expectation value on the global state $|\text{PI}_N\rangle$ is given as
\begin{eqnarray}\label{eq:merm_simply}
     \mathcal{M}_{N-k}  &:=& \bra{\text{PI}_N}(M_{N-k } \otimes \mathds{1}_k)\ket{\text{PI}_N} \nonumber \\
     &=& 2^{N - k} \sum_{s=0}^k {k \choose s} {N \choose s}^{-1/2} {N \choose k-s}^{-1/2} c_s \, c_{N-k+s} \nonumber\\
     &=& \sum_{s=0}^k A^{N,k}(s) \, c_s \, c_{N-k+s}.
\end{eqnarray}
Now, we shall prove that for $k<N/2$ the maximum quantum value of the considered inequality is given by
\begin{align}
     \mathcal{M}^{max}_{N-k} &= \max_s \frac{1}{2}A^{N,k}(s)=\frac{1}{2}A^{N,k}([\,k/2\,]),
     \label{eq:maxMABK}
\end{align}
where by $[\, \cdot \,]$ we denote the closest integer in both directions. Later, we will use this result to show when the above exceeds the local bound and thus leads to the claimed $k$-polygamous behavior. First, let us show that $\mathcal{M}^{max}_{N-k}=\max_s \frac{1}{2}A^{N,k}(s)$. 
When $k<N/2$, the variables $c_s$ and $c_{N-k+s}$ are independent and without loss of generality, we can assume that $c_s=0$ for $k\leq s\leq N-k$ since they do not contribute to the expectation value.
Due to state normalisation, the non-zero variables satisfy
\begin{align}
\sum^k_{s=0} (c^2_s + c^2_{N-k+s} )=\sum^k_{s=0} p_s =1 \,,
\end{align}
where $p_s:=(c^2_s + c^2_{N-k+s})\geq 0$.
Note that $c_sc_{N-k+s}$ appearing in Eq.~\eqref{eq:merm_simply} reaches its maximum value when $c_sc_{N-k+s}=p_s/2$. Thus, for a fixed $N$ and $k$, maximisation of Eq.~\eqref{eq:merm_simply} reduces to
\begin{align}
     \mathcal{M}^{\text{max}}_{N-k} = \max_s \frac{1}{2} \sum_{s=0}^k A^{N,k}(s) p_s = \max_s \frac{1}{2} A^{N,k}(s),
\end{align}
since it is a convex combination. Now, it is left to prove that the maximum of $A^{N,k}(s)$ is given for $s=[\,k/2\,]$. This can be seen by noting that $A^{N,\,k}(s)$ is symmetric with respect to $k/2$ and increases strictly up to this point. To verify this, let us show that for a given integer $s$
\begin{align}\label{eq:Aindec}
\frac{A^{N,\,k}(s)}{A^{N,\,k}(s+1)}=\frac{(N-s) (s+1)}{(N-(k-s)+1) (s)} \leq 1,
\end{align}
holds. By rearranging the terms, one can obtain the equivalent statement given as
\begin{align}
(N-k)(k-2s-1)\geq 0 \,.
\end{align}
The above holds for $0 \leq s \leq [\, k/2 \, ]$, and thus $A^{N,k}(s)$ increases up to $[k/2]$. Similarly, the function decreases from $s=k/2$ when $k$ is even, but from $s= \lceil k/2 \rceil$ when $k$ is odd. 
As a consequence, the $s$ corresponding to the maximum is unique for even $k$, but for odd, the maximum is reached by two values: $s=\lfloor k/2 \rfloor, \lceil  k/2  \rceil.$
This proves $\mathcal{M}^{\text{max}}_{N-k} =\frac{1}{2}A^{N,k}([\,k/2\,])$. 

The optimal value of $s$ determines the amplitudes $c_i$ and thus yields the optimal state that reaches the $\mathcal{M}^{\max}_{N-k}$. Explicitly, for a given $N$ and $k<N/2$ they are given as
\begin{align}\label{eq:max_poly}
       |\phi^k_N\rangle=\frac{1}{\sqrt{2}} \left( |D_N^{[\,k/2\,]}\rangle+ |D_N^{N-k+[\,k/2\,] }\rangle\right).
\end{align}
For $k$ odd, Eq.~\eqref{eq:max_poly} denotes two possibilities since there are two states reaching the maximum: when $[\,k/2\,]$ corresponds to $\lfloor k/2 \rfloor$ and to $\lceil k/2 \rceil.$
In fact, any superposition of these two optimal states will also maximize the expectation value.
For $k=1$ and taking $[ \,k/2\, ] =1$, the optimal state becomes 
\begin{eqnarray} \label{eq:1-poly}
    |\phi^{1}_{N}\rangle&=&\frac{1}{\sqrt{2}} \left( |D_N^{1}\rangle+ |D_N^{N}\rangle\right) \nonumber\\
    &=& \frac{1}{\sqrt{2}} \left( |W_N\rangle+ |1\cdots 1 \rangle\right),
\end{eqnarray}
where $|D_N^1\rangle  \equiv |W_N \rangle$ is the $N$-qubit W state, which is exactly the result obtained in~\cite{poly}.

The above result can be used to establish that for all considered values of $k$ there exists such $N_k$ that all MABK inequalities among $(N_k-k)$ parties are violated at once, i.e. $\mathcal{M}^{max}_{N_k-k}/L_{N_k-k}>1$. This follows from the fact that the violation factor for a given $k$ increases as a function of $N$ for high enough $N$, and it is unbounded. For example, in the case of even $k$, the violation factor increases as long as $N> (2 + \sqrt{2})k/2-1$ and then reaches infinity in the limit of $N\rightarrow \infty$. Hence, it has to exceed $1$ for some $N_k$, and this proves our Theorem. From the symmetry, $\mathcal{M}^{\text{max}}_{N-k}$ is the maximal expectation value on all ${N \choose k}$ of the subsystems involving $N-k$ parties. Since we have shown explicitly that there exists at least one family of inequalities and states for which $k$-polygamy holds, a general claim about the existence of the generalized polygamy has to hold. However, again, we note that the obtained $N_k$ for the MABK inequalities do not have to be optimal. This was already noted in~\cite{poly} for $k=1$.

One interesting fact about the generalized polygamy is that if for a given $N$ one can violate all $(N-k)$-party Bell inequalities, then with a different state, all $(N-k')$-party Bell inequalities with $k'<k$ can also be violated. This is clear from the monotonicity of $\mathcal{M}^{max}_{N-k}/L$, and makes the minimal $N_k$ the number of interest. A few initial values of the minimal $N_k$ for which the generalized polygamy holds are presented in Table~\ref{tab:poly}. Notably, the minimal $N_k$ for the MABK inequalities scales approximately as a linear function of $k$. For $N\leq 500$, we observe it to follow $a \cdot k+b$, where $a \approx 2.77844$ and $b \approx 1.21748$. Using it \textit{only two} errors after rounding to the nearest integer of $\pm 1$ appear (for $k=1$ and $k=274$), and hence this function approximates the desired minimal $N_k$ well. Based on the interpolation, the maximal number of parties that can be discarded from each subsystem is roughly $N/3$.

In the presence of an additional source of noise, the provided $N_k$ values can increase. Consider the depolarizing noise that transforms the initial $N$-qubit state $\rho$ to $(1-v)\one /2^{N}+v \rho$, where $v$ denotes the state's visibility. Given such a noise, we can examine the change of $N_k$ in a simple way since quantum expectations included in the inequalities are only rescaled by the visibility $v$. For example, in the $k=1$ case, the five-qubit state defined in (\ref{eq:1-poly}) yields no violation for $v\leq \sqrt{5/8}$, and one has to adjust $N_1$ to a higher value.

\begin{table}[h]
\caption{\label{tab:poly} Minimal $N_k$ needed to observe the $k$-polygamous behavior of the MABK inequalities.}
\vspace{0.3cm}
\begin{tabular}{c| c c c c c c c c c c c c }
\toprule
 $k$ \,& \, $1$ & $2$ & $3$ & $4$ & $5$ & $6$ & $7$ & $8$ & $9$ & $10$ & $11$ & $12$\\ 
\hline
$N_{k}$ & \, $5$ & $7$ & $10$ & $12$ & $15$ & $18$ & $21$ & $23$ & $26$ & $29$ & $32$ & $35$
\\
\bottomrule
\end{tabular}
\end{table}

\subsection{Maximal violation of the symmetrized MABK inequalities}

In the Bell polygamous scenario, all inequalities between $(N-k)$ parties can be violated simultaneously. The states provided in Eq.~\eqref{eq:max_poly} reach the maximum of the $(N-k)$-qubit MABK inequalities among the reduced states of the permutation invariant pure $N$-qubit states. However, each of these inequalities cannot be violated maximally, as this happens for the GHZ state~\cite{MERMIN, Ardehali_1992, Belinskii1993}. Here, we will show that the 
$N$-qubit inequality $M^{\mathrm{sym}}_{N,k}$ defined as the sum of MABK inequalities acting on all $\binom{N}{k}$ qubit subsystems $S$ of size $N-k$ and the optimal GHZ settings is violated maximally by the $k$-polygamous states provided in Eq.~\eqref{eq:max_poly}. More specifically, we shall prove that Eq.~\eqref{eq:max_poly} is the eigenvector of the corresponding Bell operator 
    \begin{align}\label{eq:Ank}
    M^{\text{sym}}_{N,k}:= \sum_{\substack{\forall S:\,\, \text{size}(S)=N-k}} (M_{N-k}) |_S \ot \one_k.
\end{align}
associated with its maximal eigenvalue. We will refer to the resulting inequality as the symmetrized MABK inequality. Its maximum quantum value for the given settings $Q^{\text{max}}_{N,k}$ is established through
the following Proposition.
\begin{proposition}\label{lem:A}
For $N/2>k$, the maximum expectation value of $M^{\text{sym}}_{N,k}$ over all $n$-qubit states $\ket{\psi}$,
\begin{align}
Q^{\text{max}}_{N,k}=\max_{\ket{\psi}}\bra{\psi} M^{\text{sym}}_{N,k}\ket{\psi}
\end{align}
is reached by $|\phi^k_N\rangle$ of Eq.~\eqref{eq:max_poly} and is given by
\begin{align}
Q^{\text{max}}_{N,k}=2^{N-k-1}\binom{N}{k}\binom{k}{[\,k/2\,]} \binom{N}{[\,k/2\,]} \binom{N}{k-[\,k/2\,]}\,. \nonumber
\end{align}
\end{proposition}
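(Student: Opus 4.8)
The plan is to prove the stronger statement announced just before the Proposition, namely that $\ket{\phi^k_N}$ of \eqref{eq:max_poly} is an eigenvector of $M^{\text{sym}}_{N-k}$ associated with its \emph{largest} eigenvalue; the value $Q^{\text{max}}_{N,k}$ then follows by evaluating that eigenvalue. I would split the argument into two parts: (i) diagonalize $M^{\text{sym}}_{N-k}$ inside the permutation-symmetric (Dicke) subspace and identify $\ket{\phi^k_N}$ as the symmetric-subspace maximizer with eigenvalue equal to the claimed $Q^{\text{max}}_{N,k}$; and (ii) show that the global maximum over \emph{all} $N$-qubit states is already attained in this subspace, so that no non-symmetric state does better. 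Part (ii) is the genuine content, since permutation invariance of $M^{\text{sym}}_{N-k}$ alone only guarantees that the symmetric subspace is preserved, not that it contains the top eigenvector.

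For part (i) I would first use the permutation invariance of $M^{\text{sym}}_{N-k}$ together with that of the Dicke states to reduce every matrix element to a single reference subsystem, picking up the overall factor $\binom{N}{k}$ exactly as in \eqref{eq:merm_simply}. Expanding $\ket{D_N^i}=\sum_{a+b=i}\sqrt{\binom{N-k}{a}\binom{k}{b}/\binom{N}{i}}\,\ket{D^a_{N-k}}\ot\ket{D^b_k}$ via the Dicke branching rule and inserting the explicit form of $M_{N-k}$, a short computation shows that $M^{\text{sym}}_{N-k}$ connects only the pairs $\ket{D_N^i}\leftrightarrow\ket{D_N^{N-k+i}}$ for $i=0,\dots,k$, with coupling $\tfrac12\binom{N}{k}A^{N,k}(i)$, while the remaining Dicke states lie in its kernel. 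The hypothesis $k<N/2$ guarantees that the index sets $\{0,\dots,k\}$ and $\{N-k,\dots,N\}$ are disjoint, so the restriction splits into $k+1$ independent $2\times2$ blocks with eigenvalues $\pm\tfrac12\binom{N}{k}A^{N,k}(i)$ and eigenvectors $\tfrac1{\sqrt2}(\ket{D_N^i}\pm\ket{D_N^{N-k+i}})$. Maximizing over $i$ using \eqref{eq:maxMABK} singles out $i=[\,k/2\,]$, i.e. the state $\ket{\phi^k_N}$, with eigenvalue $\tfrac12\binom{N}{k}A^{N,k}([\,k/2\,])=Q^{\text{max}}_{N,k}$, reproducing the stated formula.

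For part (ii) I would work in the computational basis. A direct inspection of $M^{\text{sym}}_{N-k}=\sum_S (M_{N-k})|_S\ot\one_k$ shows that $\bra{y}M^{\text{sym}}_{N-k}\ket{x}\neq0$ only when $x$ and $y$ differ on a subset $S$ of exactly $N-k$ positions on which $x$ is constant; hence the operator maps the low-weight sector $\bigoplus_{w\le k}\mathcal H_w$ into the high-weight sector $\bigoplus_{w\ge N-k}\mathcal H_w$ and back, changing the Hamming weight by exactly $N-k$. Because $k<N/2$ these two sectors are disjoint (the middle weights lie in the kernel), so $M^{\text{sym}}_{N-k}$ is block-off-diagonal and its largest eigenvalue equals $\max_w\|C_w\|$, where $C_w:\mathcal H_{w+N-k}\to\mathcal H_w$ is the weight-lowering block for $w=0,\dots,k$. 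I would then compute $C_w^\dagger C_w$, whose entry between weight-$(w+N-k)$ strings $y,y'$ is proportional to $\binom{|\mathrm{supp}(y)\cap\mathrm{supp}(y')|}{w}$. Any two such supports overlap in at least $2(w+N-k)-N=2w+N-2k\ge w$ positions, so for $k<N/2$ \emph{every} entry of $C_w^\dagger C_w$ is strictly positive.

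The decisive step is then Perron--Frobenius: a strictly positive symmetric matrix has a simple largest eigenvalue whose eigenvector is the unique positive one. Since $C_w^\dagger C_w$ commutes with the $S_N$-action and the uniform (Dicke) vector $\ket{D_N^{w+N-k}}$ is the unique $S_N$-invariant positive vector in $\mathcal H_{w+N-k}$, it must be this Perron eigenvector; hence $\|C_w\|$ is attained on the symmetric subspace. Taking the maximum over $w$ then shows the global maximum of $M^{\text{sym}}_{N-k}$ is attained there too, completing the reduction to part (i) and yielding $Q^{\text{max}}_{N,k}$ with optimizer $\ket{\phi^k_N}$. I expect the main obstacle to be precisely this reduction to the symmetric subspace: establishing the block-off-diagonal structure and the strict positivity of $C_w^\dagger C_w$ --- both of which hinge on the hypothesis $k<N/2$ --- is what rules out a larger eigenvalue in a non-symmetric $S_N$-sector. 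For odd $k$ the degeneracy between $\lfloor k/2\rfloor$ and $\lceil k/2\rceil$ is harmless, as any superposition of the two optimizers gives the same eigenvalue.
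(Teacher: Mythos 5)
Your proposal is correct, and its decisive step is genuinely different from the paper's. The paper proves the reduction to the symmetric subspace via representation theory: it first establishes, by a recursive combinatorial identity, that $\mathrm{sym}[M_{N-k}\ot \one_k]=k!\,(S_+^{N-k}+S_-^{N-k})$ in terms of the collective ladder operators, then block-diagonalizes the permutation-invariant operator in the spin basis $\ket{j,m,\alpha}$ (Schur--Weyl), and argues that the relevant matrix elements $\prod_{i}\sqrt{j(j+1)-m(m\pm i)}$ grow with $j$, so the optimum sits in the $j=N/2$ (Dicke) block, where the problem collapses to the earlier single-MABK maximization with an extra factor $\binom{N}{k}$. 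You instead stay in the computational basis: the weight-sector analysis showing $M^{\text{sym}}_{N-k}$ is block-off-diagonal between $\bigoplus_{w\le k}\mathcal{H}_w$ and $\bigoplus_{w\ge N-k}\mathcal{H}_w$ (with the middle weights in the kernel), followed by Perron--Frobenius on the entrywise strictly positive matrices $C_w^\dagger C_w$ combined with $S_N$-equivariance to force the top singular vector to be the Dicke state, replaces the paper's spin-basis machinery entirely. Your route buys something real: it is elementary, it makes the use of the hypothesis $k<N/2$ completely transparent (sector disjointness and the overlap bound $2w+N-2k>w$ both fail otherwise), and it delivers more structural information --- the full spectrum $\pm\tfrac12\binom{N}{k}A^{N,k}(i)$ with simplicity of the top eigenvalue in each block --- thereby proving the "eigenvector of maximal eigenvalue" claim more rigorously than the paper's monotonicity-in-$j$ argument, which is stated somewhat loosely (the ranges of $m$ and the constants differ across $j$-blocks). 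What the paper's route buys in exchange is the reusable collective-operator form $S_\pm^{N-k}$ of the symmetrized Bell operator. One caveat: your eigenvalue $\tfrac12\binom{N}{k}A^{N,k}([\,k/2\,])=2^{N-k-1}\binom{N}{k}\binom{k}{[\,k/2\,]}\binom{N}{[\,k/2\,]}^{-1/2}\binom{N}{k-[\,k/2\,]}^{-1/2}$ is the correct value, but it does \emph{not} literally match the formula displayed in the Proposition, which has evidently dropped the $-1/2$ exponents on the last two binomials (check $k=0$ or $k=1$); so rather than "reproducing the stated formula," your computation corrects a typo in it.
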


\begin{proof}
To start, it is convenient to rewrite Eq.~\eqref{eq:Ank} as
\begin{align}\label{eq:perminv}
    M^{\text{sym}}_{N,k} &= \frac{1}{k!(N-k)!} \sum_{P\in S_n}P(M_{N-k}\ot \one_k)P^\dagger \nonumber \\
    &= \frac{1}{k!(N-k)!} \text{sym}[M_{N-k}\ot \one_k]
    \,,
\end{align}
where $S_n$ is the permutation group acting on $n$-qubits and we define $\mathrm{sym}[O]:=\sum_{P\in S_n} P O P^\dagger$ as the symmetrisation of an operator $O$.
Note that $M^{\text{sym}}_{N,k}$ is a permutation-invariant operator and thus, it can be block-diagonalised when expressed in the spin basis~\cite{Moroder_2012}.
As a consequence, the maximisation from Eq.~\eqref{eq:Ank} is reduced to an optimisation over states of the following form
\begin{align}\label{eq:spin_state}
\ket{\psi_{j,\alpha}}=\sum^{j}_{m=-j} g_{j,m,\alpha} \ket{j,m,\alpha} 
\end{align}
where $\lbrace \ket{j,m,\alpha} \rbrace$ is the spin basis with $j= j_{\min}, j_{\min} + 1, \dots , N/2$ starting from $j_{\min}=0$ or $j_{\min}=1/2$ depending on whether $N$ is even or odd, respectively. Here $\alpha$ counts the degeneracy of the blocks with the same value of $j$ and $m=-j,\dots,j$.
For $j=N/2$, the block does not have a degeneracy, and the spin basis corresponds to Dicke states. Namely we have
\begin{align}
\ket{D^{i}_N} :=\ket{N/2,N/2-i,1} \,. 
\end{align}
Since all blocks with for given $j$ are equal, without loss of generality, one can fix $\alpha=1$. Thus,
\begin{align}\label{eq:spinopt}
\max_{\ket{\psi}}\bra{\psi}M^{\text{sym}}_{N,k}\ket{\psi} =\max_{j, \ket{\psi_{j,1}} }\bra{\psi_{j,1}}M^{\text{sym}}_{N,k}\ket{\psi_{j,1}}
\end{align}
where we maximise over all spin states with $\alpha=1$.

Since we have expressed our problem in the spin basis, we will also rewrite $M^{\text{sym}}_{N,k}$ in terms of spin matrices for convenience. First, note that $M_{N-k}$ can be written as~\cite{MERMIN}
\begin{align}
    (\mathcal{M}_{N-k}\ot \one_k)=2^{N-k-1}(\sigma^{\ot N-k }_-  + \sigma^{\ot N-k }_+ )\ot \one_k
\end{align}
with $\sigma_\pm$ being the ladder operators,
\begin{align}
\sigma_-=\begin{pmatrix} 0&&1 \\ 0&&0\end{pmatrix} \quad \text{and } \quad \sigma_+=\begin{pmatrix} 0&&0 \\ 1&&0\end{pmatrix} \,.
\end{align}
From this it follows that  $M^{\text{sym}}_{N,k}$ can be re-expressed as
\begin{align}\label{eq:pawelexp}
   \mathrm{sym} [M_{N-k}\ot \one_k]=k!(S^{N-k}_+ +S^{N-k}_-)
\end{align} 
with $S_{\pm}=\sum^N_i \sigma^{(i)}_{\pm}$ where $\sigma^{(i)}_{\pm}$ is a $N$-qubit operator acting non-trivially in qubit $j$ with $\sigma_{\pm}$.

This can be seen from the following argument. Note that 
\begin{align}\label{eq:sigma}
    \sym[\sigma^{\ot N-k }_- \ot \one_k]&=\sigma^{(1)}_-   \sym[\sigma_-^{\ot N-k-1} \ot \one_k]^{(2,\dots,n)}\nonumber\\&  +\sigma^{(2)}_-  \sym[\sigma_-^{\ot N-k-1} \ot \one_k]^{(1,3,\dots,n)}  \nonumber \\& 
    + \dots  \\&
    + \sigma^{(n)}_-   \sym[\sigma_-^{ \ot N-k-1} \ot \one_k]^{(1,2,\dots,n-1)} \,. \nonumber 
\end{align}
Where $\sym[O]^{S}$ is the symmetrisation acting non-trivially on qubits from the set $S$ with $O$ being an operator with support on these qubits.
Using the fact that $\sigma_-\sigma_-=0$, we can write the first term of the above sum as 
\begin{align}\label{eq:decomp}
&\frac{\sigma^{(1)}_-}{(k+1)} \Big((k+1)\sym[\sigma_-^{\ot N-k-1} \ot \one_k]^{(2,\dots,n)} \nonumber \\  &+ \sigma^{(1)}_-(N-k-1) \sym[\sigma_-^{\ot N-k-2} \ot \one_{k+1}]^{(2,\dots,n)} \Big)\,.  
\end{align}
However, this is nothing else than
\begin{align}
\frac{1}{(k+1)}\sigma_-^{(1)} \sym[\sigma_-^{\ot N-k-1} \ot \one_{k}] \,.
\end{align}
since $(k+1)$ and $(N-k-1)$ take into account the repeating elements that yield a sum over all $n$ permutations.
Similarly, one can do the same for all terms in Eq.~\eqref{eq:sigma} and obtain
\begin{align}
  \sym[\sigma^{\ot N-k }_- \ot \one_k] & = \frac{\sum^N_{i=1}\sigma_-^{(i)}}{(k+1)} \\&
  \cdot \sym[\sigma_-^{\ot N-k-1} \ot \one_{k+1}] \,. \nonumber
\end{align}
The application of the previous formula recursively gives
\begin{align}
  \sym[\sigma^{\ot N-k }_- \ot \one_k] &=   \frac{\left(\sum^N_{i=1}\sigma_-^{(i)}\right)^{N-k-1}}{(k+1)\dots (n-1)} \sym[\sigma_- \ot \one_{n-1}] \nonumber \\ 
  &= k!\left(\sum^N_{i=1}\sigma_-^{(i)}\right)^{N-k} \,.
\end{align} 
By repeating the procedure for $\sym[\sigma^{\ot N-k }_+ \ot \one_k]$, we obtain Eq.~\eqref{eq:pawelexp}.
Combining Eqs.~\eqref{eq:perminv} and ~\eqref{eq:pawelexp} results in
\begin{align}
M^{\text{sym}}_{N,k}=\frac{1}{(N-k)!} (S^{N-k}_+ +S^{N-k}_-) \,.
\end{align}

Finally, we will prove that the set of states which reach the maximum from Eq.~\eqref{eq:spinopt} can be restricted to states written in the Dicke basis. 
It is important to recall that $S_{\pm}$ acts on the spin basis as
\begin{align}
S_{\pm} \! \ket{j,m,\alpha} &=\sqrt{j(j+1)\!-\!m(m\pm 1)} \ket{j,m\pm 1,\alpha} \,. 
\end{align}
We use this to compute the expectation value of $M^{\text{sym}}_{N-k}$ for the general spin state as
\begin{align}\label{eq:dev}
&\bra{\psi_{j,1}} M^{\text{sym}}_{N,k} \ket{\psi_{j,1}} \nonumber \\&= \sum^{j}_{m,\,m'=-j}  \ \bra{j,m',1} M^{\text{sym}}_{N,k} \ket{j,m,1} c_{j,m'}c_{j,m}\nonumber \\ 
&\propto \prod^{N-k}_{i=1}\sqrt{j(j+1)-m(m+i)} c_{j,m+N+k}c_{j,m} \nonumber \\ &+ \prod^{N-k}_{i=1}\sqrt{j(j+1)-m(m- i)} c_{j,m+N-k}c_{j,m} \nonumber \\
&=f_+(j,m)+f_-(j,m),
\end{align}
where $c_{j,m} := g_{j,m,1}$ [see Eq.~\eqref{eq:spin_state}].

Note that $f_{\pm}(j,m)$ increases with $j$, and thus the maximum of Eq.~\eqref{eq:dev} is reached for $j=N/2$.
Thus,
\begin{align}
\max_{j, \ket{\psi_{j,1}} }\bra{\psi_{j,1}}M^{\text{sym}}_{N,k}\ket{\psi_{j,1}} = \max_{\ket{\text{PI}_N} }\bra{\text{PI}_N}M^{\text{sym}}_{N,k}\ket{\text{PI}_N}, \nonumber
\end{align}
with ${\ket{\psi_{N/2,1}}}=\ket{\text{PI}_N}$ as defined in Eq.~\eqref{eq:PI_state}.
Since $\bra{\text{PI}_N}M^{\text{sym}}_{N,k}\ket{\text{PI}_N}=\binom{N}{k}\bra{\text{PI}_N} (M_{N-k}\ot \one)\ket{\text{PI}_N}$, the maximization is equivalent to the one shown in Eq.~\eqref{eq:maxMABK} but, with an extra factor of $\binom{N}{k}$.
This ends the proof.
\end{proof}

Going back to Eq.~\eqref{eq:Ank}, the classical bound for the symmetrized MABK inequality is given as 
\begin{align}
L^{\text{sym}}_{N,k}= 2^{(N-k-1)/2}\binom{N}{k}
\end{align}
and as a consequence, the maximum violation factor reads
\begin{align}\label{eq:qoverc}
\frac{Q^{\max}_{N,k}}{L^{\mathrm{sym}}_{N,k}} = 2^{(N-k+1)/2}\binom{k}{[\,k/2\,]} \binom{N}{[\,k/2\,]} \binom{N}{k-[\,k/2\,]}\,.
\end{align}
This value is equal to the sum of maximal simultaneous violations of the MABK inequalities for all subsystems of size $N-k$. 
Thus, all results for $k$-polygamy states violating MABK inequalities can be extrapolated to symmetrized MABK. 
For example,
Table~\ref{tab:poly} shows for a given $k$, the minimum $N_k$ to obtain a violation of the symmetrized MABK.
The above result opens up a possibility for self-testing of all subsets of $N-k$ observers in an $N$-party communication scenario at once using the polygamous features~\cite{Panwar2023}, but we do not attempt to prove this here and leave these for future work on the subject.

\subsection{Optimal $N_2$ beyond MABK inequalities}

As mentioned before, MABK inequalities can yield general claims about the polygamous Bell correlations, but are not expected to give the optimal, i.e., the lowest $N_k$. Using the linear programming method introduced in~\cite{poly}, we establish the optimal $N_k$ for $k=2$. The corresponding $N_2$, we found that the generalized polygamy is possible starting from $N_2=6$, as compared to $N_2=7$ in the MABK case. The corresponding inequality is $\langle I_{ABCD} \rangle \leq 6$, where 
\begin{eqnarray}
I_{ABCD} &=& -2~{\rm Sym}[A_1B_2] - 2~{\rm Sym}[A_2 B_2] \\&–& A_1B_1C_1D_1 + {\rm Sym}[A_1B_1C_1D_2] \nonumber\\&-& {\rm Sym}[A_1B_1C_2D_2]- {\rm Sym}[A_1B_2C_2D_2] \nonumber \\&+& A_2B_2C_2D_2 \nonumber
\end{eqnarray}
and similarly for other partitions. Here, $A_{1/2}$ stands for the observable measured by the first party (for the other parties, we use the alphabetical order). We use a compact notation for symmetrizing over different observers
\begin{eqnarray}
{\rm Sym}[A_k B_l C_m D_n] = \sum_{\pi(k,l,m,n)} A_{k} B_{l} C_{m} D_{n},
\end{eqnarray}
where the sum is over all permutations of $(k,l,m,n)$, denoted as $\pi(k,l,m,n)$, assuming $A_0 = B_0 = C_0 =D_0 = 1$, e.g., 
${\rm Sym}[A_1 B_2]={\rm Sym}[A_1 B_2 C_0 D_0] =   A_1 B_2 C_0 D_0+A_2 B_1 C_0 D_0
+A_1 B_0 C_2 D_0+A_2 B_0 C_1 D_0
+A_1 B_0 C_0 D_2+A_2 B_0 C_0 D_1
+A_0 B_1 C_2 D_0+A_0 B_2 C_1 D_0 
+A_0 B_1 C_0 D_2+A_0 B_2 C_0 D_1
+A_0 B_0 C_1 D_2+A_1 B_2 C_2 D_1=
 A_1 B_2 +A_2 B_1 +A_1 C_2+A_2 C_1
+A_1 D_2+A_2 D_1+B_1 C_2 D_0+ B_2 C_1  
+B_1 C_0 D_2+ B_2 D_1+C_1 D_2+C_2 D_1$
being the permutations of $k=1, l=2,m=0,n=0$. Note that the symmetrization procedure used here differs from the one used in the previous section, where it applies to qubits.  
The optimal state yielding all four-party violations equal $6.271 > 6$ is given as
\begin{equation}
\ket{\phi}= \frac{1}{\sqrt{2}}\left(|\phi_{6}^{0}\rangle+|\phi_6^{4}\rangle\right).
\end{equation}
where $|\phi_6^{0}\rangle$ and $|\phi_6^{4}\rangle$ are defined in Eq.~\eqref{eq:max_poly}. The corresponding pairs of observables are the same for all parties and given by $A_i=\cos \varphi_{i} \sigma_x + \sin \varphi_i \sigma_y$ ($i=1,2$) with $\varphi_1 = 0.9047$ and $\varphi_2 = 1.9652$, where $\sigma_{i}$ denotes the Pauli matrices.

\section{Comparison of the polygamous violations with the standard GHZ states approach}
Let us now focus on the possible usage of the polygamous states and their comparison with the quantum violations of MABK inequalities achieved by the GHZ states. 

First, we compare the possible violations of the distinct MABK inequalities for the $k$-polygamous and the GHZ states. We will do so by analyzing the sum of squared MABK inequalities violation factors over all subsystems
\begin{align}\label{eq:nonlocality}
\mathcal{S}_\psi:=\sum_{\substack{\forall S:\,\text{size}(S)=N-k}} \frac{|\bra{\psi}(M_{N-k}) |_S \ot \one_k  \ket{\psi}|^2}{2^{(n-k-1)}}\,.
\end{align} 
Note that whenever $\mathcal{S}_{\psi}> {N \choose k}$, at least one of the inequalities has to be violated. The above expression has a similar form to the complementarity relations for Bell violations and sector lengths in entanglement detection~\cite{Zukowski_2002,Hassan_2008, Badziag_2008,Kurzynski2011, Tran_2015, Ketterer_2019, Wyderka_2020}. In order to obtain any violations in Eq.~\eqref{eq:nonlocality} using the GHZ states, one would usually send $|\text{GHZ}_{N-k}\rangle$ to a given partition $S$. In such a case, one gets
\begin{equation}
    \mathcal{S}_{\text{GHZ}_{N-k}}=2^{N-k-1}.
\end{equation}
Observing a higher number of violations would require multiple experiments aiming at testing different partitions $S_\psi$ in each of them. 
For the polygamous states, each of these violation factors is quite small compared to the ones achievable by the GHZ states. However, due to polygamy, all of them exceed 1 and contribute to the sum in $\mathcal{S}_{\psi}$. In fact, this feature allows the polygamous states to become advantageous over the GHZ states. 
From Proposition~\ref{lem:A}, we know that $|\phi^k_N\rangle$ is the state that reaches the maximum expectation value of $M^{\text{sym}}_{N-k}$. 
This can be seen from the fact that MABK inequalities are non-negative when the operator is written in the
$N$-qubit computational basis.
Thus, without loss of generality, we can assume that all expectation values for the Mermin-type operator are positive when considering the maximal value.
As a consequence, the optimal $k$-polygamous state will also give a maximum of $\mathcal{S}_\psi$.
The highest $\mathcal{S}_{\psi}$ is thus given as 
\begin{equation}
 \mathcal{S}_{\phi^k_N}=2^{-(N-k+1)}\binom{N}{k}A^{N,k}(\lceil k/2\rceil)^2,
\end{equation}
and corresponds to the states from Eq.~\eqref{eq:max_poly}.
One can see that for all $k\neq 0$ the polygamous state must be at least as good as the GHZ state of $N-k$ qubits. The explicit comparison shows that $\mathcal{S}_{\phi^k_N}$ and $\mathcal{S}_{\text{GHZ}_{N-k}}$ are exactly the same for $k=1$, while for $k>1$ the polygamous states outmatch the GHZ strategy, see Fig. \ref{fig:hyper_plot}. One could use this example to perform a certification of the non-classicality of the source on multiple subsystems at once or a nonlocality-based benchmark of a quantum device utilizing the fact of binomially many violations at once.

\begin{figure}[h]
    \centering
    \includegraphics[width=0.48\textwidth]{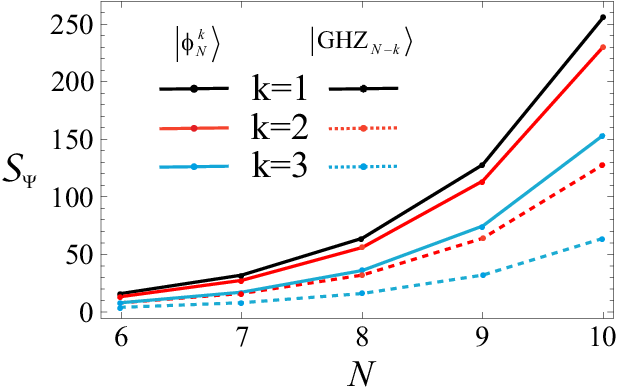}
    \caption{Comparison of the sum of squared violation factors of the $(N-k)$-qubit MABK inequalities for the $\ket{\text{GHZ}_{N-k}}$ (dashed lines) and the optimal $k$-polygamous states  $|\phi_N^{k} \rangle$  (solid lines). Each color corresponds to a different value of $k=1,2,3$. Despite the fact that the GHZ states give rise to the maximal violation of a given MABK inequality, the amount of small violations over all subsystems makes the polygamous states achieve higher values of $\mathcal{S_{\psi}}$.}
    \label{fig:hyper_plot}
\end{figure}

\section{$K$-Hyper-polygamy}
Exploring the concept of $k$-polygamy led us to the observation of an even more exciting phenomenon. Namely, we were able to find states for which the $k$-polygamy holds for multiple subsystem sizes at once. Given an upper bound $K$, on the number of discarded parties $k$, we will refer to this structure as $K$-hyper-polygamy.
Consider the first non-trivial example of $K=2$. 
As in the previous section, we will work with the $N$-qubit permutation invariant states violating $(N-k)$-party MABK inequalities. 
Now, we want to construct a state which is simultaneously $2$- and $1$-polygamous. One possible choice is to take a superposition of the three states that reach the maximum violation of $\mathcal{M}_{N-k}$ [see Eq.~\eqref{eq:max_poly}]: two for $k=1$ and one for $k=2$. This leads to the following state
\begin{align}\label{eq:sum_hyper}
 \frac{1}{\sqrt{10}}\left({\ket{D_N^{0}} + 2{\ket{D_N^{1}}+2\ket{D_N^{N-1}}}
 + \ket{D_N^{N}}}\right), 
\end{align}
which can be generalized  using a single parameter $\alpha \in [0,1]$ via
\begin{align}\label{eq:hyperpoly}
 \ket{\Psi^2_N}&=\alpha\frac{{\ket{D_N^{1}}+\ket{D_N^{N-1}}}}{\sqrt{2}} \nonumber \\ &+ \sqrt{1-\alpha^2} \frac{\ket{D_N^{0}} + \ket{D_N^{N}}}{\sqrt{2}} .
\end{align}
Computing the expectations of $(N-1)$- and $(N-2)$-party Mermin-type operators on the above state using Eq.~\eqref{eq:merm_simp}, one gets
\begin{align} 
    \mathcal{M}_{N-1}/L_{N-1}&= \frac{2^{N/2}}{\sqrt{N}}\alpha\sqrt{1-\alpha^2} \,,  \nonumber\\
    \mathcal{M}_{N-2}/L_{N-2}&= \frac{2^{(N-1)/2}}{N} \alpha^2 \,.
\end{align}
where again $L_{N-k}$ is the local bound. Now, we want to check if one can choose such $\alpha$ that both violation factors exceed $1$.
Indeed, such a task is possible whenever
\begin{align}\label{eq:limithyper}
\alpha^2 \in \left(\, \frac{N}{2^{(N-1)/2}}, \quad \frac{1}{2}+\frac{1}{2^{N/2}}\sqrt{2^{N-2} - N} \right) \,,
\end{align}
where the lower part is determined by $\mathcal{M}_{N-2}/L_{N-k}$ and the upper part by $\mathcal{M}_{N-1}/L_{N-k}$.
This region is non-zero, i.e., the upper part is larger than the lower part, from $N=7$. Hence, violation of all two-setting $(N-1)$- and $(N-2)$-qubit Bell inequalities is possible when $N>7$. Complementarily, it is not possible to obtain a $2$-hyper-polygamous state for $N< 7$ using the state construction from Eq.~\eqref{eq:hyperpoly}. This is consistent with our previous findings on $k$-polygamy.
Eq.~\eqref{eq:limithyper} also shows that for large enough $N$, one obtains $2$-hyper-polygamous states for almost all $\alpha$. 
In a similar manner, we can superpose all possible $k$-polygamous states that reach the maximal violation for $0 \leq k\leq K$. By normalizing the states and grouping the terms as in Eq.~\eqref{eq:hyperpoly}, we obtain
\begin{align}\label{eq:gen_hyperpoly}
\ket{\Psi^K_{N}}=\sum^{\lceil{K/2\rceil}}_{i=0} \alpha_i\frac{{\ket{D_N^{i}}+\ket{D_N^{N-i}}}}{\sqrt{2}} \,,
\end{align}
with $\alpha_i \in  [0,1]$.
Although it might be the case that for a given $N$, one can find a $K$-hyper-polygamous state written as Eq.~\eqref{eq:gen2_hyperpoly}, we are interested in finding the minimal $N_K$ for which there exists a $K$-hyper-polygamous state leading to the nonlocal expectations of the examined Mermin-type operator.

This problem can be formulated with semidefinite programming (SDP) by rewriting 
Eq.~\eqref{eq:merm_simply} for mixed states
\begin{align}
 \mathcal{M}_{N-k}(\varrho) = \sum_{s=0}^k A^{N,k}(s) \, \varrho_{s,N-k+s} \,.
\end{align}
The SDP then reads 
\begin{align}\label{eq:SDP}
    \quad & \max_\varrho \sum^K_{k=1} \mathcal{M}_{N-k}(\varrho)/L_{N-k} \nonumber \,,\\
   \text{s.t.} \quad &  \mathcal{M}_{N-k}(\varrho)/L_{N-k} \geq 1 \quad \text{for} \quad K=1,\dots, k \nonumber \,,\\
   \quad & \tr(\varrho) =1\,.  \end{align}
The result of the SDP is likely to be close to a pure state because the maximization is over a sum of expectation values and the constraints only impose a lower bound on them (with the exception of the normalization).
In fact, for all of the tested cases, the state was always pure. 
The SDP from Eq.~\eqref{eq:SDP} allowed us to find the minimal $N_K$ such that there exists a $K$-hyper-polygamous state.
These results are shown in Table~\ref{tab:hyper}. 
\begin{table}[h]
\caption{\label{tab:hyper} Minimal $N_K$ to obtain a $K$-hyper-polygamous behavior for MABK inequalities and the optimal GHZ state settings.}
\vspace{0.3cm}
\begin{tabular}{c| c c c c c c c c c c c }
\toprule
 $K$ \,& \, $2$ & $3$ & $4$ & $5$ & $6$ & $7$ & $8$ & $9$ & $10$ & $11$ & $12$\\ 
\hline
$N_{K}$ \, & \,$7$ & $10$ & $13$ & $15$ & $18$ & $21$ & $24$ & $26$ & $29$ & $32$ & $35$
\\
\bottomrule
\end{tabular}
\end{table} 
From the SDP solutions, we conjecture that the states written as
\begin{align}\label{eq:gen2_hyperpoly}
\ket{\Phi^K_{N}}=\sum^{K}_{i=0} \beta_i\frac{{\ket{D_N^{i}}+\ket{D_N^{N-i}}}}{\sqrt{2}}\,, 
\end{align}
are $K$-hyper-polygamous for a particular choice of $\beta_i \in [0,1]$.
For the cases from Table~\ref{tab:hyper}, this conjecture is fulfilled.
Note that Eq.~\eqref{eq:gen_hyperpoly} is a specific case of Eq.~\eqref{eq:gen2_hyperpoly}.
In fact, the SDP shows that a $N_K$ qubit state written as Eq.~\eqref{eq:gen2_hyperpoly} cannot always be a $K$-hyper-polygamous.
For the cases where $K=5,\,6,\, 9,\, 10$, the state $\ket{\Psi^K_{N}}$ is $K$-hyper-polygamous when $N$ is at least $16,\, 19, \, 28, \,30$, respectively. 
However, Table~\ref{tab:hyper} shows that the minimum is achieved when $N_K=15,\, 18, \, 27,\, 29$.
It is important to note that $N_k$ for polygamy is not always the same as $N_K$ for hyper-polygamy (see Tables~\ref{tab:poly} and \ref{tab:hyper}). Indeed, we see that
$N_k \neq N_K$ for $K=k=4$ and $8$.

Here, it is important to note that the examined Mermin-type operator $\mathcal{M}_{N-k}$ arising from the MABK inequalities requires different measurement settings for each $k$ in order to observe the violations. These settings are the optimal GHZ state settings and are given as $O_1=\cos\theta_{N-k}\sigma_x+\sin \theta_{N-k} \sigma_y$ and $O_2=\cos(\theta_{N-k}-\pi/2)\sigma_x+\sin (\theta_{N-k}-\pi/2) \sigma_y$ for each observer, where $\theta_{N-k}=(N-k-1)\pi/(4[N-k])$. This means that the hyperpolygamous Bell violations cannot be directly observed using the MABK inequalities at the same time. However, it can be verified that the standard Mermin inequalities~\cite{MERMIN} with $\sigma_x,\sigma_y$ settings can. This is due to the fact that they give rise to the same Bell operator, but with a different local bound for even $N-k$, i.e. $L^{\text{even}}_{N-k}=2^{(N-k)/2}$.
This makes the adjustment of $N_K$ necessary as shown in Table~\ref{tab:hyper_mermin}, but the general effect holds, including the conjecture shown in Eq.~\eqref{eq:gen2_hyperpoly}. 
Nevertheless, the family of states in  Eq.~\eqref{eq:hyperpoly} is no longer $2$-hyper-polygamous for $N=7$, and one requires going up to $N=9$, to see a non-zero region in the corresponding modification of Eq.~\eqref{eq:limithyper}.

Note that the $K$-hyper-polygamous states are Bell nonlocal analogies to the concept of $m$-resistant states~\cite{Quinta_2019} -- the states that stay entangled after a loss of arbitrary $m$ parties, but become separable when more particles are traced out. In fact, the $1$-polygamous state (\ref{eq:1-poly}) lies in the same plane as the $1$-resistant state of three qubits.

\begin{table}[h]
\caption{\label{tab:hyper_mermin} Minimal $N_K$ to obtain a $K$-hyper-polygamous behavior for Mermin inequalities. In bold, the differences with respect to Table~\ref{tab:hyper} are highlighted.}
\vspace{0.3cm}
\begin{tabular}{c| c c c c c c c c c c c }
\toprule
 $K$ \,& \, $2$ & $3$ & $4$ & $5$ & $6$ & $7$ & $8$ & $9$ & $10$ & $11$ & $12$\\ 
\hline
$N_{K}$ \, & \,$7$ & $10$ & $13$ & $\boldsymbol{16}$ & $\boldsymbol{19}$ & $\boldsymbol{22}$ & $\boldsymbol{25}$ & $\boldsymbol{28}$ & $29$ & $32$ & $35$
\\
\bottomrule
\end{tabular}
\end{table} 

\section{Conclusions}
In this work, we explored the polygamous structure in the violations of Bell inequalities for multipartite quantum systems. Using the symmetries of the MABK inequalities and the corresponding Bell operators, we constructed a family of permutation-invariant states violating all possible MABK inequalities of a given size at once. Namely, we proved that for any $k>0$ there always exists a system size $N_k$ for which all MABK inequalities on $(N_k-k)$-qubits are violated with a single $ N_k$-qubit state. Furthermore, we show that the minimal $N_k$ for which this is possible scales approximately linearly with $k$, and the number of discarded subsystems in the Bell test can be as high as $\approx N/3$. Thus, even after losing a third of the whole system, one can still observe Bell nonlocal correlations.

These findings demonstrate that quantum correlations in multipartite systems are significantly more abundant than previously recognized, with the amount of shared nonlocality scaling extensively with the number of possible subsystem partitions.  This unique property of polygamous states makes it possible to improve the benchmarks of nonlocality in quantum devices at many subsystems at once. Moreover, $k$-polygamous states stay Bell nonlocal after a loss of an arbitrary set of $k$ parties, and this property could become especially useful in communication scenarios. We specifically show that an $N$-qubit Bell inequality composed of the symmetrization of a $(N-k)$-party Mermin-type operator reaches its maximum value for the introduced $k$-polygamous states, which are optimal in scenarios with positive integer $k<N/2$. Moreover, they outperform the violation strategies based on $|\text{GHZ}_{N-k}\rangle$ for $k > 1$ (at $k=1$ both strategies give the same result). This observation gives future perspectives for device-independent self-testing of multiple groups of nodes in a quantum network at once, providing a step forward towards scalable quantum technologies.

We further explored the concept of $K$-hyper-polygamy, where a single quantum state exhibits $k$-polygamy for all $1\leq k \leq K$, allowing for an even greater number of simultaneous violations of Bell inequalities. Using SDP, we determined the minimal $N_K$ for $K\leq 12$. For example, the 2-hyper-polygamy exists already in a system of $N=7$ qubits. This demonstrates the possibility of observing 28 simultaneous Bell violations in all $6$- and $5$-qubit subsystems using a single seven-qubit state.

\section*{Acknowledgements}

We thank Marcin Paw{\l}owski and Tam\'as V\'ertesi for helpful discussions.
The authors are supported by the National Science Centre (NCN, Poland) within the OPUS project (Grant No. 2024/53/B/ST2/04103). PC acknowledges the support of the Foundation for Polish Science (FNP) within the START program. This work is partially carried out under IRA Programme, project no. FENG.02.01-IP.05-0006/23, financed by the FENG program 2021-2027, Priority FENG.02, Measure FENG.02.01., with the support of the FNP. 

\bibliography{ref}

\appendix

\end{document}